\DeclareMathSymbol{\qm}{\mathalpha}{operators}{"3F}
\DeclareMathAlphabet{\mathbbold}{U}{bbold}{m}{n}
\newcommand{\bstep}[3]{#1 \vdash #2 \Rightarrow #3}
\newcommand{\bstepb}[3]{#1 \vdash #2 \hookrightarrow #3}
\newcommand{\ite}[3]{\textbf{if}\; #1\; \textbf{then}\; #2\; \textbf{else}\; #3}
\newcommand{\true}{\textbf{True}}
\newcommand{\false}{\textbf{False}}
\newcommand{\none}{\textbf{None}}
\newcommand{\some}[1]{\textbf{Some}\;#1}
\newcommand{\either}[2]{\textbf{either}\;#1\;\textbf{or}\;#2}
\newcommand{\lam}[3]{\boldsymbol{\lambda}^{#2}_{#1} \bm{.}\; #3}
\newcommand{\fby}[2]{#1\; \textbf{fby}\; #2}
\newcommand{\pre}[1]{\textbf{pre}\; #1}
\newcommand{\arrow}[2]{#1\, \bm{\rightarrow}\, #2}
\newcommand{\tup}[2]{(#1\;\textbf{,}\;\ldots\;\textbf{,}\;#2)}
\newcommand{\fqm}{\framebox{$\mathbbold{\qm}$}}
\begin{document}

\title{Mimosa: A Language for Asynchronous Implementation of Embedded Systems Software}
\titlerunning{The Mimosa Language}

\author{Nikolaus Huber\inst{1}${}^{(\text{\Letter})}$\orcidlink{0000-0002-1616-0602} \and
Susanne Graf\inst{1,2}\orcidlink{0000-0003-4354-6807} \and
Philipp Rümmer\inst{1,3}\orcidlink{0000-0002-2733-7098} \and
Wang Yi\inst{1}\orcidlink{0000-0002-2994-6110}}

\authorrunning{N. Huber et al.}
\institute{Uppsala University, Uppsala, Sweden\\
\and
Université Grenoble Alpes, CNRS, Grenoble INP, VERIMAG, France\\
\and
University of Regensburg, Regensburg, Germany}

\maketitle

\begin{abstract}

This paper introduces the Mimosa language, a programming language for the design and implementation of asynchronous reactive systems, describing them as a collection of time-triggered processes which communicate through FIFO buffers. Syntactically, Mimosa builds upon the Lustre data-flow language, augmenting it with a new semantics to allow for the expression of side-effectful computations, and extending it with an asynchronous coordination layer which orchestrates the communication between processes. A formal semantics is given to both the process and coordination layer through a textual and graphical rewriting calculus, respectively, and a prototype interpreter for simulation is provided.

\keywords{Data-flow  \and Kahn process networks \and MIMOS \and Embedded Systems
\and Cyber-physical systems \and Coordination language \and Formal semantics.}
\end{abstract}
\section{Introduction}
\label{sec:introduction}

The synchronous paradigm is an established method for designing, implementing, and verifying the software layer of embedded systems. Languages like Lustre~\cite{halbwachs-lustre} (and its commercial implementation SCADE~\cite{colaco-scade}), Signal~\cite{benveniste-signal}, and Esterel~\cite{berry-esterel} have been successfully used for the implementation of various safety-critical embedded control applications. However, on complex execution platforms, such as multi- and many-core processors or distributed architectures, the stringent timing constraints required by the execution models of such languages becomes increasingly difficult to uphold. Problems further arise when trying to update synchronous systems, as newly added software components might put additional constraints on the global system tick.

The MIMOS computational model~\cite{wang-mimos} offers a new way of expressing asynchronous software designs for embedded systems. It describes a program as a network of periodically triggered processes, which communicate through FIFO buffers. The MIMOS project~\cite{wang-mimos-tool} is an ongoing effort to develop a set of tools to allow designing, implementing, and verifying embedded software on top of the MIMOS model.

This paper deals with the first layer of this framework, the user-facing programming language, the MIMOS Application (Mimosa) language. Mimosa consists of a small kernel, which covers the implementation of the (input/output) function of each process, as well as the coordination layer. Our main contribution is the definition of the formal semantics of both layers, utilizing different formal frameworks for each.

The paper is structured in the following way. Section~\ref{sec:mimos} gives a short overview of the MIMOS computational model, and defines the subset which is currently covered by Mimosa. Section~\ref{sec:examples} gives an introduction to Mimosa by showcasing various examples, before defining the abstract syntax and formal semantics of the language in Section~\ref{sec:language}. We present a simulator for Mimosa in Section~\ref{sec:simulation}. Finally, we report on related work in Section~\ref{sec:related-work} and conclude with a list of ongoing and future efforts in Section~\ref{sec:conclusion}.
\section{The MIMOS model}
\label{sec:mimos}

The MIMOS model~\cite{wang-mimos} builds upon the well-known framework of Kahn process networks (KPNs)~\cite{kahn-networks}. In a KPN, different software components communicate exclusively through unbounded FIFO buffers, where reading from a buffer is blocking, i.e., if a component tries to read from an empty buffer it is suspended until data is available. Kahn showed, that the output of such a system (i.e., the history of values appearing in each buffer) is deterministic, independently of the scheduling order of the individual components.

In real-time systems, a bounded delay between input and output must be guaranteed, which due to the independence of any scheduling order is difficult to prove for a given KPN. MIMOS addresses this issue by assigning to each component a release pattern (i.e., an infinite series of increasing time-tags), which marks the points in time at which each respective component will try to execute. In case a component does not have all its required input at a release time point, it remains idle until its next release.

Through this assignment of release patterns, different extensions to KPNs are possible, without affecting (timed) determinism. One such extension presented in \cite{wang-mimos} is registers, which instead of buffering allows expressing a latest-value semantics.

Another possible extension is optional inputs~\cite{wang-mimos-tool}. In a KPN, a process trying to read from a buffer must always wait until data is available. With timed release patterns, it is possible to define an input as optional, where the value is subject to availability in the buffer.

\begin{figure}[t]
    \centering
    \includegraphics[width=0.75\linewidth]{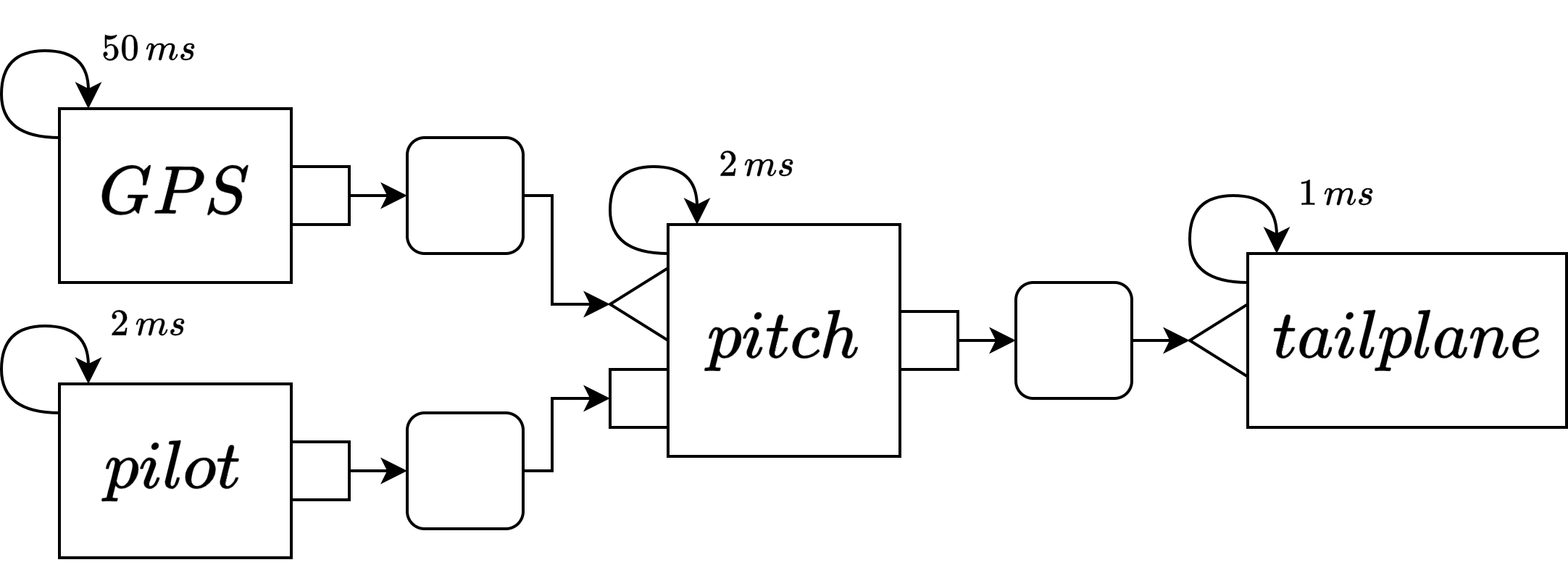}
    \caption{Airplane pitch control system, adapted from~\protect\cite{henzinger-giotto}}
    \label{fig:pitch-example}
\end{figure}

Figure~\ref{fig:pitch-example} shows a simplified example of a MIMOS network for the pitch control of an airplane. It consists of 4 nodes, one handling the input from the GPS subsystem, one polling the stick-control of the pilot, the actual pitch control algorithm, and the driver for the tailplane. As the GPS system is slow, it runs with a lower frequency (i.e., higher period) than the pitch control algorithm. The tailplane driver runs fastest, as it must react to minute changes in the mechanical parts of the aircraft. By utilizing optional inputs at the pitch control node for the GPS signal, as well as at the tailplane driver for the control signal, each node can run at its required frequency without constraining the rest of the system.

In the original MIMOS paper~\cite{wang-mimos}, a fix-point semantics has been presented, however, it neither refers to a particular programming language for the implementation of the processes, nor does it provide a textual syntax for the definition of the coordination layer. Our contribution in this paper is the formal definition of Mimosa, which equips MIMOS with a prototype programming language kernel.

As the main focus of this work is on the formal semantics of the language, we have put certain restrictions on the particular MIMOS model that we cover. For simplicity, registers are currently not modelled by our semantics, we discuss in Section~\ref{sec:conclusion} on how they can be added. We also restrict the release patterns of the processes to periodic releases, with an implicit deadline at the end of each period (i.e., if a component with period $p$ reads its required inputs at time $t$, the outputs are written at $t + p$).
\section{From Lustre to Mimosa}
\label{sec:examples}

\begin{figure}[p]
\centering
\includegraphics[width=0.8\linewidth]{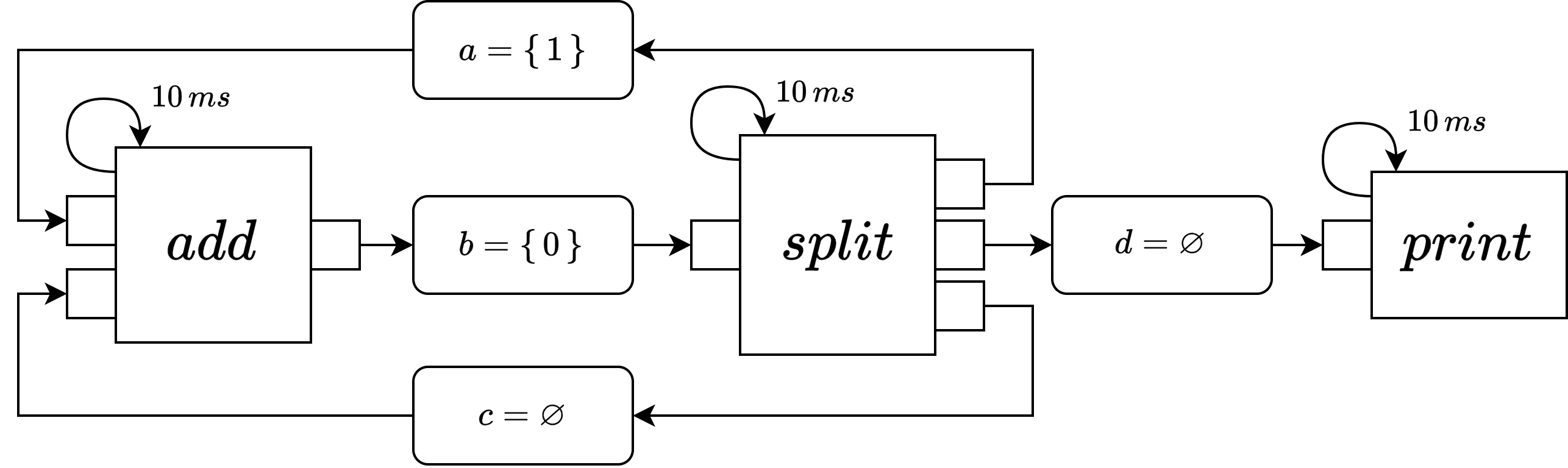}
\captionof{figure}{Fibonacci example (graphical)}
\label{fig:fibonacci}
\begin{minted}[frame=lines, linenos]{text}
step print_int (_ : int) --> ()
step add (x, y) --> z { z = x + y }
step split inp --> (o1, o2, o3) { o1, o2, o3 = inp, inp, inp }

channel a : int = { 1 }
channel b : int = { 0 }
channel c : int
channel d : int

node add implements add (a, c) --> (b) every 10ms
node split implements split (b) --> (a, d, c) every 10ms
node print implements print_int (d) --> () every 10ms
\end{minted}
\captionof{listing}{Fibonacci example (Mimosa)}
\label{code:fib}
\end{figure}

\begin{figure}[p]
\centering
\includegraphics[width=0.5\linewidth]{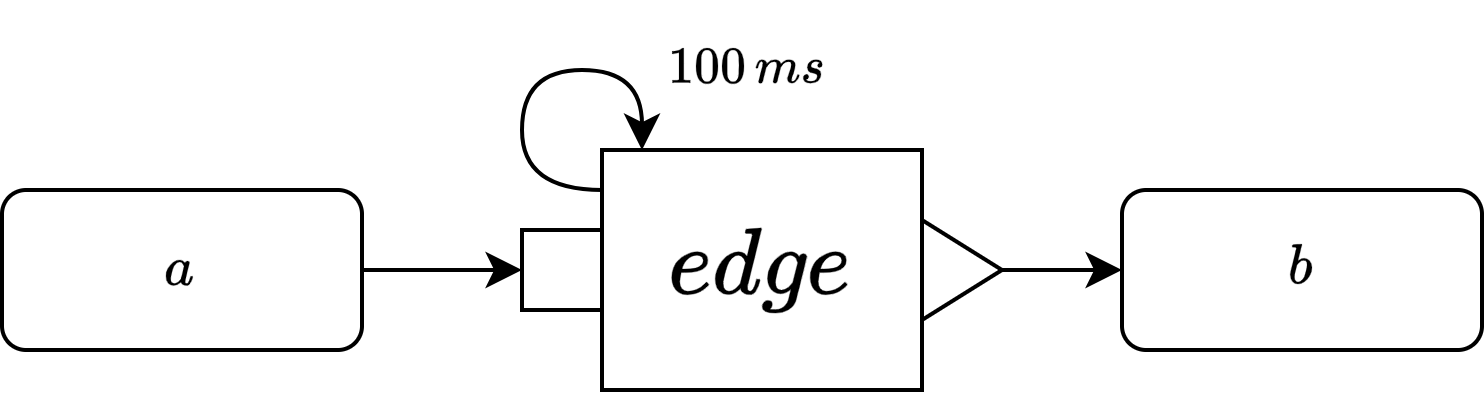}
\captionof{figure}{Edge detector (graphical)}
\label{fig:enter-label}
\begin{minted}[frame=lines, linenos]{text}
step edge_detect (in : bool) --> (out : bool?)
{
    pre_in = in -> pre in;
    out = if !pre_in && in then (Some true)
          else if pre_in && !in then (Some false)
          else None;
}

channel a : bool
channel b : bool

node edge implements edge_detect (a) --> (b?) every 100ms    
\end{minted}
\captionof{listing}{Edge detector (Mimosa)}
\label{code:edge-detector}
\end{figure}

In Lustre, any variable or expression denotes a (conceptually infinite) stream of values, so that a variable $x$ actually represents a sequence ($x_0,\, x_1,\, \ldots)$. A Lustre program transfers a set of input streams into a set of output streams. Streams are defined through equations, each of the form $x = e$, where the expression $e$ is formed from constants (representing infinite sequences of the same value), variables (i.e., references to other streams), and stream operators (like the usual unary and binary arithmetic and logic functions extended to operate point-wise on sequences) applied to argument streams.

In addition, Lustre defines a set of \emph{sequence operators} (\textbf{pre}, \textbf{fby}, and $\bm{\rightarrow}$). If $x = (x_0,\, x_1,\, \ldots)$, then $\pre{x}$ refers to the value of $x$ from the previous cycle, i.e., $\pre{x} = (\perp,\, x_0,\, x_1,\, \ldots)$. The value at the first cycle is undefined ($\perp$). If $y = (y_0,\, y_1,\, \ldots)$, then $\arrow{x}{y} = (x_0,\, y_1,\, y_2,\, \ldots)$ and $\fby{x}{y} = (x_0,\, y_0,\, y_1,\, \ldots)$. The initialization operator $\bm{\rightarrow}$ is often used to initialize the first element of a stream under the \textbf{pre} operator (e.g., $\arrow{0}{\pre x}$). 

The condition operator is also extended to work point-wise on streams. Conceptually, $\ite{e_c}{e_t}{e_e}$ always evaluates both $e_t$ and $e_e$, and then selects according to the value of $e_c$. This allows Lustre to have full referential transparency, i.e., a variable inside an expression can always be substituted by its definition. In Lustre, expression evaluation is always assumed to be side effect free. Selective evaluation in Lustre is possible through the definition of multiple clocks, where the evaluation of an expression depends on the value of a boolean stream. However, experience has shown~\cite{halbwachs-lustre-at-work}, that working with multiple clocks is often perceived as too difficult to comprehend by the programmer, and therefore not often used.

Mimosa can be considered an extension to Lustre. Since each component can have its own period, we shift the interpretation of streams towards timed sequences of values in FIFO buffers. We call the components of the network \emph{nodes}, and each node transfers (timed) input sequences into (timed) output sequences. Analogous to Lustre, sequences are defined through equations. However, we do not assume referential transparency, and therefore allow expression evaluation to cause side effects. The only effect we currently model is state (i.e., memory), however, as we explain in Section~\ref{sec:conclusion} our semantics lays the foundation for dealing with other effects as well. In the remainder of this section, we introduce Mimosa by showing examples of concrete programs.

A Mimosa program is a collection of top-level definitions consisting of \emph{steps}, \emph{channels}, and \emph{nodes}. A step is an elementary unit of computation, it does not have any perception of time. Steps are equivalent to function definitions in other programming languages. A step can be used inside another step, but it must be possible, during compilation, to order them in a way so that no two steps are mutually recursive. 
A node is an instantiation of a step as a periodically triggered process. It has a defined period, and port definitions to declare how it communicates with other nodes.
A channel is a FIFO buffer, through which nodes communicate with each other. Each channel is connected to exactly one writing and one reading node.
A port of a node may be marked as \emph{optional}. If an input port to a node is marked as optional, the node can be executed even if the input channel is empty. Similarly, if an output port is marked as optional, it may happen that after a node has executed its step, there is no output written to the connected channel.
 
Figure~\ref{fig:fibonacci} shows an example network adapted from~\cite{geilen-kahn-reactive} which calculates the Fibonacci sequence. It consists of three nodes, \texttt{add}, \texttt{split}, and \texttt{print}, each trying to execute every $10\, ms$. 
Listing~\ref{code:fib} shows the equivalent Mimosa program. It starts with the definition of the steps, where the first is a \emph{step prototype}. It defines only the name of the step and its signature, which will later be provided externally. For normal steps, the step signature is followed by a set of equations (same as in Lustre). The Fibonacci example defines the steps \texttt{add} and \texttt{split}, which have a trivial definition (\texttt{split} could have been equivalently defined through three separate equations). Similar to Lustre, the order of equations is not significant, as the compiler orders them automatically, and rejects a program with cyclic dependencies between equations.

After the step definitions, the channels are defined, of which two have initial elements. Finally, the three nodes are defined as instantiations of the before mentioned steps. They list which step they implement, their period, and their connections (which refer to the channels defined before).

The code shown in Listing~\ref{code:edge-detector} showcases the use of an optional output. The node \texttt{edge} outputs \texttt{true} whenever a rising edge in the boolean input sequence is detected, \texttt{false} when a falling edge is detected, and nothing otherwise. This can be done by wrapping the output in an option type (\texttt{out : bool?}) and declaring the output port as optional (\texttt{b?}), which means that even though the output of the step \texttt{edge} is \texttt{bool?}, the type of the channel is still \texttt{bool}. This example also illustrates the use of the \textbf{pre} and $\bm{\rightarrow}$ operators to define a memory cell (in this case, to remember the value of the input from the previous execution).

Figure~\ref{fig:pin-example} shows an example of how this edge detector with optional output may be used in a system. Assuming \texttt{pin} polls the current level of a pin, the \texttt{edge} node then detects rising and falling edges, which it communicates to a \texttt{controller}, which can run at a lower frequency (assuming that the pin-level switches infrequently). While such a system can be implemented in the synchronous paradigm as well, the base-clock (i.e., the shortest time tick) would be constrained by the component with the smallest period. Due to the asynchronous character of Mimosa, this issue does not arise, as individual components can have different periods without constraining the overall system.

\begin{figure}[t]
    \centering
    \includegraphics[width=0.75\linewidth]{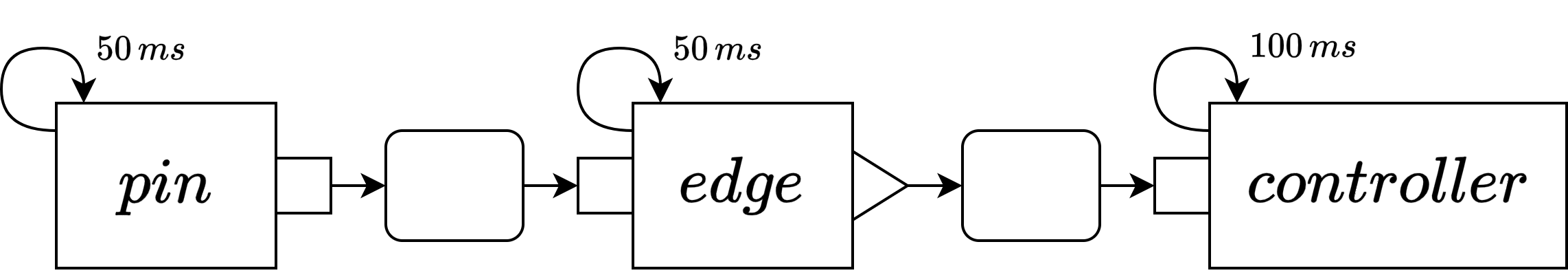}
    \caption{Usage example for edge detector}
    \label{fig:pin-example}
\end{figure}
\section{Syntax and semantics}
\label{sec:language}

We now formally define the semantics of Mimosa. As the language naturally divides into two separate layers (the step and coordination layer), we give semantics to each of them individually. We start by defining the abstract syntax and semantics of steps.

\subsection{Step layer}

\begin{figure}[th]
\centering
\fbox{
\begin{bnf}(comment = {::},)[]
    $e$ : \textsf{Expression} ::= 
    | $x$ :: Variable 
    | \texttt{c} :: Constant 
    | $e\bm{,} \ldots \bm{,} e$ :: Tuple 
    | $\pre{e}$ :: Pre 
    | $\fby{e}{e}$ :: Followed-by 
    | $\arrow{e}{e}$ :: Initialized-by 
    | $e$ $e$ :: Application 
    | $\ite{e}{e}{e}$ :: Conditional
    | $\none$ :: None
    | $\some{e}$ :: Some
    | $\either{e}{e}$ :: Option match
    | $\lam{p}{p}{b}$ :: Abstraction 
    ;;
    $p$ : \textsf{Pattern} ::= 
    | $x$ :: Variable pattern
    | $p\bm{,} \ldots\bm{,} p$ :: Tuple pattern
    ;;
    $b$ : \textsf{Body} ::= 
    | $[\,p_i \bm{=} e_i\,]^n$ :: List of $n$ equations 
    ;;
\end{bnf}}
\caption{Abstract syntax of Mimosa expressions}
\label{fig:abstract-syntax-steps}
\end{figure}

The abstract syntax of Mimosa expressions shown in Figure~\ref{fig:abstract-syntax-steps} is actually slightly more expressive than the one defined by the concrete syntax of Mimosa, where we only allow the definition of function abstractions (i.e., steps) on the top-level. The abstract syntax also allows nested (anonymous) functions, which allows us to treat functions as regular expressions. We expand on this choice in Section~\ref{sec:conclusion}.

A Mimosa expression is either a variable, a constant, a tuple construction, a definition of memory (using the \emph{memory operators} \textbf{pre}, \textbf{fby}, and $\bm{\rightarrow}$), an application of a function to an argument, a conditional, the construction or destruction of an optional expression, or a function abstraction.

We assume a suitable set of constants and a corresponding set of basic steps for arithmetic and boolean operations to be defined in a standard library. Mimosa uses a Hindley-Milner-style type system~\cite{milner-type-polymorphism,damas-type-schemes}, which we omit for brevity.

Patterns either refer to a single variable, or to a tuple of sub-patterns. In practice, we also allow  \textunderscore{} to mean a pattern matching any value.

The abstraction $\lam{p_{in}}{p_{out}}{[p_i = e_i]^n}$ defines a function that requires a value compatible with pattern $p_{in}$, and returns a value according to pattern $p_{out}$. It is defined through a set of $n$ equations, where each one is given by a left-hand pattern $p_i$ and a right-hand expression $e_i$. It is assumed that equations are ordered according to dependency, are causal (i.e., any cyclic dependency is broken by a \textbf{pre}), and properly initialized (see Appendix~\ref{app:init-analysis}). 

An expression can be evaluated under an environment $\Gamma$ which maps names to values. We define two operations, \emph{projection} and \emph{update}, on environments:

\paragraph{Projection}
The operation $\Gamma \Downarrow_{p}$ returns the bindings of the variable names given by pattern~$p$. For example:
\begin{align*}
(x \mapsto 1, y \mapsto 2) \Downarrow_x\; &=\; 1 \\ 
(x \mapsto 1, y \mapsto 2) \Downarrow_{(x, y)}\; &=\; (1, 2) 
\end{align*}

\paragraph{Update}
The operation $\Gamma \Uparrow_p^x$ returns a new environment where the value x is mapped to the name(s) given by pattern $p$. For example:
\begin{align*} 
(x \mapsto 1, y \mapsto 2) \Uparrow_z^3\; &=\; (x \mapsto 1, y \mapsto 2, z \mapsto 3) \\ 
(x \mapsto 1, y \mapsto 2) \Uparrow_{(x, y)}^{(3, 4)}\; &=\; (x \mapsto 3, y \mapsto 4) \label{eq:destruct}
\end{align*}

As illustrated by the example above, updating an environment is destructive, i.e., previous bindings are lost.

The semantics of Lustre is usually defined denotationally as the unique least-fixed-point of the given set of stream equations. This works well for a language with full referential transparency. In Mimosa, however, the selective evaluation of certain sub-expressions makes it harder to find a denotational interpretation. We therefore express the semantics of expression evaluation in Mimosa operationally.

The evaluation relation $\Gamma \vdash e \Rightarrow v, e'$ expresses, that under an environment $\Gamma$, the expression $e$ evaluates to value $v$ (which is either a constant or a tuple of values), and an updated expression $e'$, which shall be evaluated the next time the current expression is run. The full set of evaluation rules is presented in Figure~\ref{fig:big-step}.

The rules for variable, constant, and tuple evaluation are trivial. The evaluation of $\pre{e}$ returns an undefined value $\perp$, where the next expression is the current value of $e$ initializing the updated $\pre{e'}$. This ensures that the current value of $e$ is always returned at the next evaluation cycle.

Both $\fby{e_1}{e_2}$ and $\arrow{e_1}{e_2}$ return the value of $e_1$, however, only in the second case is $e_2$ also evaluated, and its value discarded (this allows, analogously to Lustre, to remove the $\perp$ values of \textbf{pre} expressions).

The rules for conditional execution are again trivial. It is important to note, that unlike in Lustre, only one branch is evaluated, while the other is kept as is in the next expression.

The expression $\either{e_1}{e_2}$ evaluates the first sub-expression $e_1$, and in case it is of the form $\some{v}$ returns value $v$, otherwise it evaluates $e_2$. Similar to conditional execution, $e_2$ is only evaluated if needed.

To evaluate an application $e_1\; e_2$, the expression $e_1$ is mapped to a function abstraction $\lam{p_{in}}{p_{out}}{[p_i=e_i]^n}$, and the argument $e_2$ to a value $v$. The list of equations $[p_i=e_i]^n$ is then evaluated in an environment in which $v$ is bound according to $p_{in}$, which results in an updated list $[p_i=e_i']^n$, and a new environment $\Gamma'$. The value of the overall application is then the projection of $p_{out}$ onto $\Gamma'$, and the next expression is again an application, where the function term is updated with the new equation list, and the argument expression is $e_2'$.

\begin{figure}[p]
\fbox{
\begin{mathpar}
\inferrule 
    {\\}
    {\bstep{\Gamma}{x}{\Gamma \Downarrow_x, x}} 
    \; (\textsc{Var})

\inferrule 
    {\\} 
    {\bstep{\Gamma}{c}{c, c}} 
    \; (\textsc{Const}) 

\inferrule 
    {\bstep{\Gamma}{e_1}{v_1, e_1'} \\ \cdots \\ \bstep{\Gamma}{e_n}{v_n, e_n'}} 
    {\bstep{\Gamma}{\tup{e_1}{e_n}}{\tup{v_1}{v_n}, \tup{e_1'}{e_n'}}} 
    \; (\textsc{Tuple})  

\inferrule 
    {\bstep{\Gamma}{e}{v, e'}} 
    {\bstep{\Gamma}{\pre{e}}{\perp, (\arrow{v}{\pre{e'}})}}
    \; (\textsc{Pre})

\\

\inferrule 
    {\bstep{\Gamma}{e_1}{v_1, \_}} 
    {\bstep{\Gamma}{(\fby{e_1}{e_2})}{v_1,e_2}}
    \; (\textsc{Fby})

\inferrule 
    {\bstep{\Gamma}{e_1}{v_1, \_} \\ \bstep{\Gamma}{e_2}{\_, e_2'}}
    {\bstep{\Gamma}{(\arrow{e_1}{e_2})}{v_1, e_2'}}
    \; (\textsc{Init}) 

\inferrule 
    {\bstep{\Gamma}{e_1}{\true, e_1'} \\ \bstep{\Gamma}{e_2}{v, e_2'}}
    {\bstep{\Gamma}{(\ite{e_1}{e_2}{e_3})}{v, (\ite{e_1'}{e_2'}{e_3})}}
    \; (\textsc{IfTrue})

\inferrule 
    {\bstep{\Gamma}{e_1}{\false, e_1'} \\ \bstep{\Gamma}{e_3}{v, e_3'}}
    {\bstep{\Gamma}{(\ite{e_1}{e_2}{e_3})}{v, (\ite{e_1'}{e_2}{e_3'})}}
    \; (\textsc{IfFalse})

\inferrule 
    {\\}
    {\bstep{\Gamma}{\textbf{None}}{\textbf{None}, \textbf{None}}}
    \; (\textsc{None})

\inferrule 
    {\bstep{\Gamma}{e}{v, e'}}
    {\bstep{\Gamma}{\some{e}}{\some{v}, \some{e'}}}
    \; (\textsc{Some})

\inferrule 
    {\bstep{\Gamma}{e_1}{\textbf{Some}\; v, e_1'}}
    {\bstep{\Gamma}{(\either{e_1}{e_2})}{v, (\either{e_1'}{e_2})}}
    \; (\textsc{EitherSome})

\inferrule 
    {\bstep{\Gamma}{e_1}{\textbf{None}\;, e_1'} \\ \bstep{\Gamma}{e_2}{v, e_2'}}
    {\bstep{\Gamma}{(\either{e_1}{e_2})}{v, (\either{e_1'}{e_2'})}}
    \; (\textsc{EitherNone})

\inferrule
    {
        \bstep{\Gamma}{e_1}{\bm{\lambda}^{p_{out}}_{p_{in}} \bm{.}\; [p_i = e_i]^n, \_} \\
        \bstep{\Gamma}{e_2}{v_2, e_2'} \\ 
        \bstepb{\Gamma \Uparrow_{p_{in}}^{v_2}}{[p_i=e_i]^n}{[p_i=e_i']^n, \Gamma'} 
    }
    {\bstep{\Gamma}{e_1\; e_2}{\Gamma' \Downarrow_{p_{out}}, (\bm{\lambda}^{p_{out}}_{p_{in}} \bm{.}\; [p_i = e_i]^n)\; e_2'}}
    \; (\textsc{App})
    
\inferrule
    {\bstep{\Gamma'}{e_1}{v_1, e_1'} \\ \ldots \\ \bstep{\Gamma'}{e_n}{v_n, e_n'}}
    {\bstepb{\Gamma}{[p_i \bm{=} e_i]^n}{[p_i \bm{=} e_i']^n, \Gamma' = \Gamma \Uparrow_{p_1}^{v_1}\ldots\Uparrow_{p_n}^{v_n}}}
    \; (\textsc{Eqs})
\end{mathpar}}
\caption{Structural semantics for expression evaluation}
\label{fig:big-step}
\end{figure}
 
To illustrate the principle of equation evaluation, we can simplify the \textsc{Eqs} rule to the evaluation of a single equation. At first try, the rule may be written as follows:
\begin{mathpar}
    \inferrule
    {\bstep{\Gamma}{e}{v, e'}}
    {\bstepb{\Gamma}{(p = e)}{(p = e'), \Gamma \Uparrow_p^v}}
    \; (\textsc{Eqs'})
\end{mathpar}

To evaluate an equation, one has to evaluate the expression $e$ on the right-hand side, which results in a value $v$ and an updated expression $e'$. The result of the evaluation of the whole equation is then an updated equation $p = e'$, and a new environment in which the value $v$ is bound according to the pattern $p$ on the left-hand side of the given equation. However, this rule falls short when trying to evaluate an equation that refers to a previous value of itself. We can illustrate that on a simple example:
\begin{align*}
    x = \arrow{0}{\pre{x}}
\end{align*}

This defines a perfectly valid sequence of integers (i.e., the constant sequence of consecutive zeros), however, if we evaluate it according to the above given rule 
\begin{mathpar}
\inferrule* 
    { \inferrule* {{ \inferrule*{ }{ \bstep{\Gamma}{0}{0, 0} } \\ \inferrule*[right=($\star$)]{ \bstep{\Gamma}{x}{\fqm, x} }{ \bstep{\Gamma}{(\pre{x})}{\perp, (\arrow{\fqm}{\pre{x}})} }}}
      { \bstep{\Gamma}{(\arrow{0}{\pre{x}})}{0, (\arrow{\fqm}{\pre{x}})} }
    }
    { \bstepb{\Gamma}{(x = \arrow{0}{\pre{x}})}{(x =\arrow{\fqm}{\pre{x}}), \Gamma \Uparrow_x^0}}
\end{mathpar}
we see that there is an issue when trying to evaluate the \textbf{pre} at $(\star)$. In the example above, \small$\fqm\;$\normalsize represents a hole in the expression, for which we would need the value of $x$. But this value is not yet bound in $\Gamma$, as it is being defined by the equation currently under evaluation. Therefore, we need to evaluate the right-hand side expression of an equation under an environment that already has the final value of the evaluation bound according to the pattern on the left-hand side: 
\begin{mathpar}
    \inferrule
    {\bstep{\Gamma'}{e}{v, e'}}
    {\bstepb{\Gamma}{(p = e)}{(p = e'), \Gamma' = \Gamma \Uparrow_p^v}}
    \; (\textsc{Eqs''})
\end{mathpar}

\noindent With this rule, we get the correct result which we formulate in the lemma below:
\begin{mathpar}
\inferrule*
    { \inferrule* {{ \inferrule*{ }{ \bstep{\Gamma'}{0}{0, 0} } \\ \inferrule*{ \bstep{\Gamma'}{x}{0, x} }{ \bstep{\Gamma'}{\pre{x}}{\perp, (\arrow{0}{\pre{x}}) }}}}
      { \bstep{\Gamma'}{\arrow{0}{\pre{x}}}{0, (\arrow{0}{\pre{x}})} } 
    }
    { \bstepb{\Gamma}{(x = \arrow{0}{\pre{x}})}{(x = \arrow{0}{\pre{x}}), \Gamma' = \Gamma \Uparrow_x^0} }
\end{mathpar}

\begin{lemma}
\label{lemma:eqs-determinsm}
Under the assumption that $\bstep{\Gamma}{e}{v,\, e'}$ defines $v$ and $e'$ uniquely, the evaluation defined by \textsc{Eqs''} is also unique, i.e., if $\bstepb{\Gamma}{(p = e)}{(p = e_1'),\, \Gamma_1'}$ and $\bstepb{\Gamma}{(p = e)}{(p = e_2'),\, \Gamma_2'}$ then $e_1' = e_2'$ and $\Gamma_1' = \Gamma_2'$. This generalizes to the rule for multiple equations (\textsc{Eqs}).
\end{lemma}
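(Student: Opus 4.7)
The plan is to reduce uniqueness of \textsc{Eqs''} to determinism of the expression relation $\Rightarrow$, together with the (global) causality assumption on equations. Suppose there are two derivations $\bstepb{\Gamma}{(p = e)}{(p = e_i'), \Gamma_i'}$ for $i = 1, 2$. By inversion on \textsc{Eqs''}, each witnesses some $v_i$ with $\Gamma_i' = \Gamma \Uparrow_p^{v_i}$ and $\bstep{\Gamma_i'}{e}{v_i, e_i'}$. It is enough to show $v_1 = v_2$: then $\Gamma_1' = \Gamma_2'$, and applying the uniqueness hypothesis on $\Rightarrow$ in the (now common) environment yields $e_1' = e_2'$.

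The equality $v_1 = v_2$ is where the fixed-point nature of the rule is really used, and here I would invoke the causality assumption that every occurrence of $p$ in $e$ is guarded by a \textbf{pre}. The key auxiliary claim, proved by structural induction on $e$, is an invariance lemma: if every free occurrence of $p$ in $e$ lies under a \textbf{pre}, then for any $\Gamma$ and any two values $u_1, u_2$, whenever $\bstep{\Gamma \Uparrow_p^{u_1}}{e}{w_1, e_1^\star}$ and $\bstep{\Gamma \Uparrow_p^{u_2}}{e}{w_2, e_2^\star}$ both hold, $w_1 = w_2$. The critical case is \textsc{Pre}, where the current value of the argument is overridden by $\perp$, so the binding of $p$ is not visible at the current step. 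The cases \textsc{Const}, \textsc{None} are trivial, \textsc{Var} is immediate since $p$ itself may not occur unguarded, and the remaining structural rules follow by direct appeals to the inductive hypothesis on sub-expressions; the \textsc{App} case additionally relies on the causality condition being propagated into the equation body of any abstraction, which is inherited from the global well-formedness of the program. Instantiating the invariance lemma at $u_1 = v_1$, $u_2 = v_2$ gives $v_1 = v_2$, closing the argument.

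For the generalization to the multi-equation rule \textsc{Eqs}, the same template applies but the invariance lemma is lifted to tuples of bindings $(p_1 \mapsto v_1, \ldots, p_n \mapsto v_n)$; under the global causality condition (the equation-dependency graph modulo \textbf{pre}-edges is acyclic, see Appendix~\ref{app:init-analysis}), the values can be resolved in topological order so that each $v_i$ is determined by the already-resolved $v_1, \ldots, v_{i-1}$ together with $\Gamma$, and uniqueness propagates along the DAG.

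I expect the main obstacle to be the clean statement of the invariance lemma, because the residual expressions $e_i^\star$ do depend on the bindings (the \textsc{Pre} rule rewrites $\pre{e}$ into $\arrow{v}{\pre{e'}}$ and thus threads $v$ into the next step). The invariance must therefore be stated only for the current-step value, with equality of residuals recovered a posteriori, once the two environments coincide and the bare determinism hypothesis on $\Rightarrow$ applies. A secondary concern is that the causality notion invoked here is only informally described in the body of the paper; the proof has to cite the precise formal version from Appendix~\ref{app:init-analysis}.
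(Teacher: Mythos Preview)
Your proposal is correct and follows the same line as the paper: both invoke the causality assumption (every occurrence of a variable from $p$ on the right-hand side is guarded by a \textbf{pre}) to argue that the current-step value is independent of the binding of $p$, whence $v_1 = v_2$, $\Gamma_1' = \Gamma_2'$, and $e_1' = e_2'$ follows from the determinism hypothesis on $\Rightarrow$. Your version is more explicit---you spell out an invariance lemma, its structural induction, and the subtle point that residuals \emph{do} depend on the binding and must be recovered a posteriori---whereas the paper gives only a three-sentence sketch; one minor correction: Appendix~\ref{app:init-analysis} treats initialization analysis, not causality, so the precise formal causality condition you wish to cite does not actually live there (it is only stated informally in the main text).
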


\begin{proof}
The rule \textsc{Eqs''} can only lead to diverging results if there are multiple valid candidates $\Gamma'$ which bind the current value of the sequence being defined. Since the value of $e$ cannot depend causally on any variable bound in $p$ (i.e., any reference on the right-hand side of the equation can only refer to variables in the left-hand side under a \textbf{pre} operator), the value is uniquely defined ($v_1 = v_2$), and therefore the updated environment must be as well. The general \textsc{Eqs} rule follows by induction on the list of equations. \qed
\end{proof}

\begin{theorem}
    The step evaluation relation $\bstep{\Gamma}{e}{v,\, e'}$ is deterministic, i.e., if $\bstep{\Gamma}{e}{v_1,\, e_1'}$ and $\bstep{\Gamma}{e}{v_2,\, e_2'}$ then $v_1 = v_2$ and $e_1' = e_2'$.\label{thm:step-determinism}
\end{theorem}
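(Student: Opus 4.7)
The plan is to proceed by structural induction on the expression $e$, doing a case analysis on its outermost form and, for each form, arguing that the applicable rule(s) in Figure~\ref{fig:big-step} yield a unique value and unique next expression. Because the inductive hypothesis is exactly the statement of the theorem applied to proper subexpressions, the only genuinely new reasoning happens in the cases where more than one rule could syntactically apply, and in the \textsc{App} case where Lemma~\ref{lemma:eqs-determinsm} comes into play.

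First I would dispense with the trivial base cases (\textsc{Var}, \textsc{Const}, \textsc{None}), for which the rule conclusion is entirely determined by $e$ and $\Gamma$. Then the single-rule inductive cases (\textsc{Tuple}, \textsc{Pre}, \textsc{Fby}, \textsc{Init}, \textsc{Some}) follow by inverting both derivations, applying the induction hypothesis to each premise, and combining. For the two pairs of branching rules (\textsc{IfTrue}/\textsc{IfFalse} and \textsc{EitherSome}/\textsc{EitherNone}), the key observation is that applying the induction hypothesis to the scrutinee $e_1$ yields a unique value, which is either $\true$ or $\false$ (respectively either $\some{v}$ or $\none$); this forces exactly one of the two rules to apply in any derivation, after which induction on the chosen branch closes the case.

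The \textsc{App} case is the main obstacle and needs Lemma~\ref{lemma:eqs-determinsm}. Here one applies the induction hypothesis to $e_1$ and $e_2$ to fix a unique lambda $\lam{p_{in}}{p_{out}}{[p_i = e_i]^n}$ and a unique argument value $v_2$ together with updated expression $e_2'$. The environment $\Gamma \Uparrow_{p_{in}}^{v_2}$ is therefore uniquely determined, and one invokes the generalization of Lemma~\ref{lemma:eqs-determinsm} to the whole equation list to obtain uniqueness of $[p_i = e_i']^n$ and $\Gamma'$; projection of $p_{out}$ on $\Gamma'$ then yields a unique result value and a unique next expression.

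The subtle point in the \textsc{App} case is that Lemma~\ref{lemma:eqs-determinsm} is conditional on determinism of the underlying expression evaluation (the very fact we are proving), so some care is required to avoid circularity. The clean way is to regard the theorem and the lemma as being established by a single mutual induction on the size of the expression being evaluated: the equations $[p_i = e_i]^n$ occurring in the lambda are strict syntactic subterms of the application $e_1\; e_2$, so the lemma is only ever invoked on strictly smaller expressions and its hypothesis is already discharged by the outer induction hypothesis. Apart from that bookkeeping, the proof is a straightforward rule-by-rule inversion argument.
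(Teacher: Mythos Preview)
Your approach is exactly the paper's: structural induction on expressions, with most cases trivial and the \textsc{App} case handled via Lemma~\ref{lemma:eqs-determinsm}. The paper's proof is literally that one sentence, so you have supplied considerably more detail, and you have made explicit the circularity between the theorem and the lemma, which the paper leaves entirely implicit.

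One small slip in your resolution of that circularity: the claim that the body equations $[p_i = e_i]^n$ are strict syntactic subterms of $e_1\; e_2$ is not valid in general. In \textsc{App} the expression $e_1$ merely \emph{evaluates} to the abstraction; when $e_1$ is, say, a variable whose binding in $\Gamma$ is a lambda, the body equations live in the environment rather than in the syntax of the expression being evaluated, so a size-of-expression metric does not decrease. The clean fix is to run the mutual induction on the height (or structure) of the derivation instead: the \textsc{Eqs} premise of \textsc{App} is a strict sub-derivation, and each premise $\bstep{\Gamma'}{e_i}{v_i,\, e_i'}$ of \textsc{Eqs} is smaller still, which is exactly what is needed to discharge the hypothesis of Lemma~\ref{lemma:eqs-determinsm}.
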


\begin{proof}
    By structural induction on the syntax of expressions. Most cases are trivial, the case for the \textsc{App} rule follows from Lemma~\ref{lemma:eqs-determinsm}. 
\qed \end{proof}

\subsection{Coordination layer}

With the semantics for expressions in place, we can now define the semantics of the coordination layer. While it is possible to give the semantics of the coordination layer as a textual rewriting calculus as well, it is easier to define and understand as a graph-rewriting system.

\begin{figure}[h]
    \centering
    \begin{subfigure}[t]{0.35\textwidth}
        \centering
        \includegraphics[width=0.6\textwidth]{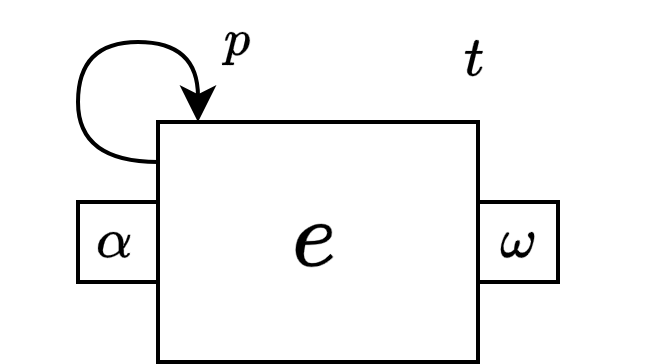}
        \caption{Node}
        \label{fig:graph-symbols-node-fifo}
    \end{subfigure}
    \hfill
    \begin{subfigure}[t]{0.35\textwidth}
        \centering
        \includegraphics[width=0.6\textwidth]{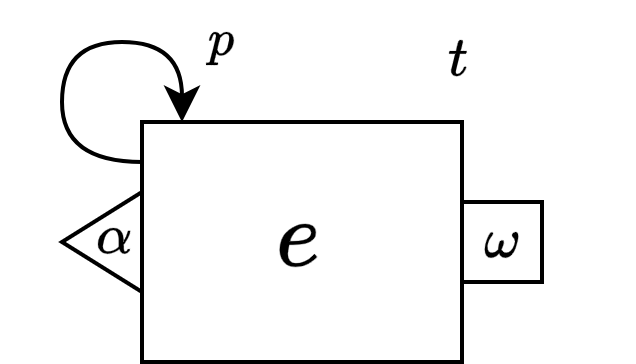}
        \caption{Node with optional input}
        \label{fig:graph-symbols-node-opt}
    \end{subfigure}
    \hfill
    \begin{subfigure}[t]{0.25\textwidth}
        \centering
        \includegraphics[width=\textwidth]{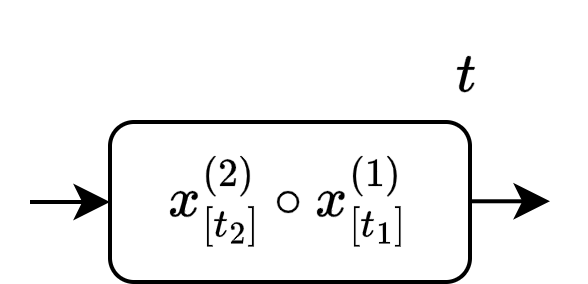}
        \caption{Channel}
        \label{fig:graph-symbols-fifo}
    \end{subfigure}
    \caption{Examples of graph symbols for the coordination layer semantics}
    \label{fig:graph-symbols}
\end{figure}

Figure \ref{fig:graph-symbols} shows examples of the used graph symbols. The first example shows a node with period $p$, which tries to execute expression $e$ at its next \emph{activation time} $t$. It communicates with other nodes through inputs taken from a channel at input $\alpha$, and puts its result in a channel at output $\omega$. Figure \ref{fig:graph-symbols-node-opt} shows the same node with an optional input $\alpha$ instead.

Figure \ref{fig:graph-symbols-fifo} depicts a channel with two elements inside. It has a time-tag $t$, which we refer to as its \emph{validity time}. Each element inside the channel also has a time tag assigned to it. The validity time of a channel is intended to be the \emph{earliest possible time-tag} of the next value written into it. We write the elements inside a channel as a sequence $\rho = x^{(n)}_{[t_n]} \circ x^{(n-1)}_{[t_{n-1}]} \circ \cdots \circ x^{(1)}_{[t_1]}$, with the following invariants:

\begin{itemize}
    \item $\forall i.\, t_i \leq t$, i.e., the time-tags of all elements in the channel must be smaller or equal to the validity time of the channel.
    \item Elements inside the channel are always ordered according to their time-tags, i.e., the oldest item (the one with the smallest time-tag) is the right-most item in the sequence.
\end{itemize}

In the current version of Mimosa, a node can only write either none or exactly one element into each output channel. In Section~\ref{sec:conclusion} we discuss how to eliminate this restriction, in which case multiple elements in a channel may have the same time-tag. This is not a problem for the semantics, as the insertion order is preserved by the order of elements in the sequence representing a channel. By slight abuse of notation, we use the infix operator $\circ$ both for appending elements on either end of the sequence%
, and for forming sequences out of individual elements themselves.

\begin{figure}[tp]
    \centering
    \includegraphics[width=0.6\linewidth]{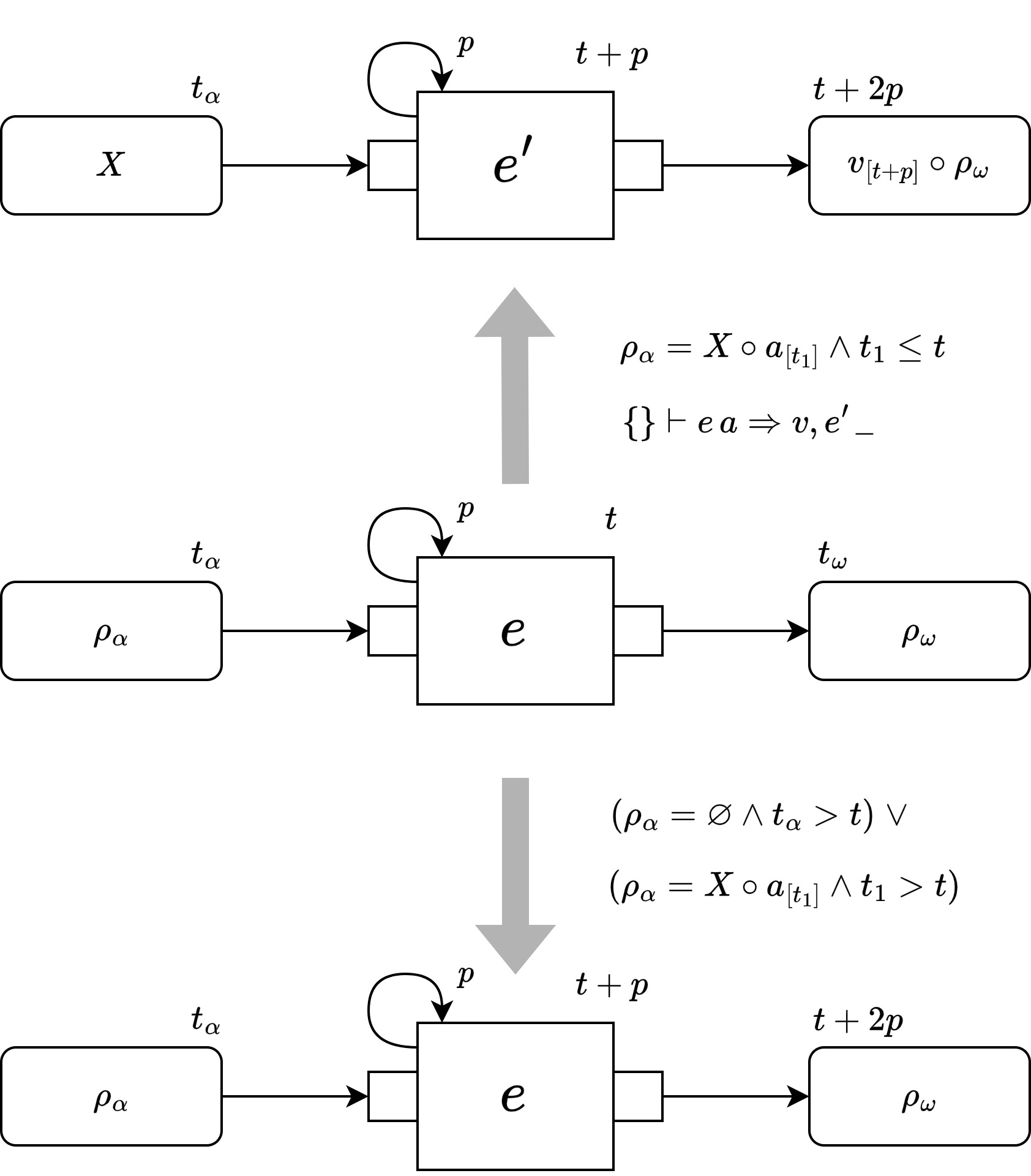}
    \caption{Rewriting rule for FIFO input.}
    \label{fig:rewrite_fifo}
\end{figure}

To ease comprehension, we only present the rewrite rules for nodes with one input and one output port. The general rules for multiple input/output are shown in Appendix~\ref{app:node-rewriting}. Figure~\ref{fig:rewrite_fifo} illustrates the rewriting rules for a node connected to two channels through its input and output port. The upper rule expresses, that if at time $t$ (i.e., the time at which the node tries to execute next) the input channel has a right-most (i.e., oldest) element $a_{[t_1]}$ for which $t_1 \leq t$ (i.e., the time-tag of the element $a$ is not in the future of the node), then the expression $e\, a$ can be evaluated. This leads to a value $v$, which is put in the output channel with time-tag $t+p$ (i.e., the end of the current period), and the validity time of the output channel can be updated to $t + 2p$, which is the earliest time at which a new element may be put into the channel. After the rewrite step, the input channel has its right-most element removed, the expression of the node is updated, the next activation time set to $t+p$, and the expression of the node rewritten to $e'$ (we can ignore the rewritten argument).

The lower rule expresses an idle step. If the input channel is empty, but its validity time $t_\alpha$ is larger than the activation time $t$ of the node, or if the right-most element has a time-tag larger than $t$ (i.e., from the perspective of the node it is in the future), then the next activation time of the node can be set to $t+p$, and the validity time of the output buffer can be updated to $t+2p$.

\begin{figure}[tp]
    \centering
    \includegraphics[width=0.6\linewidth]{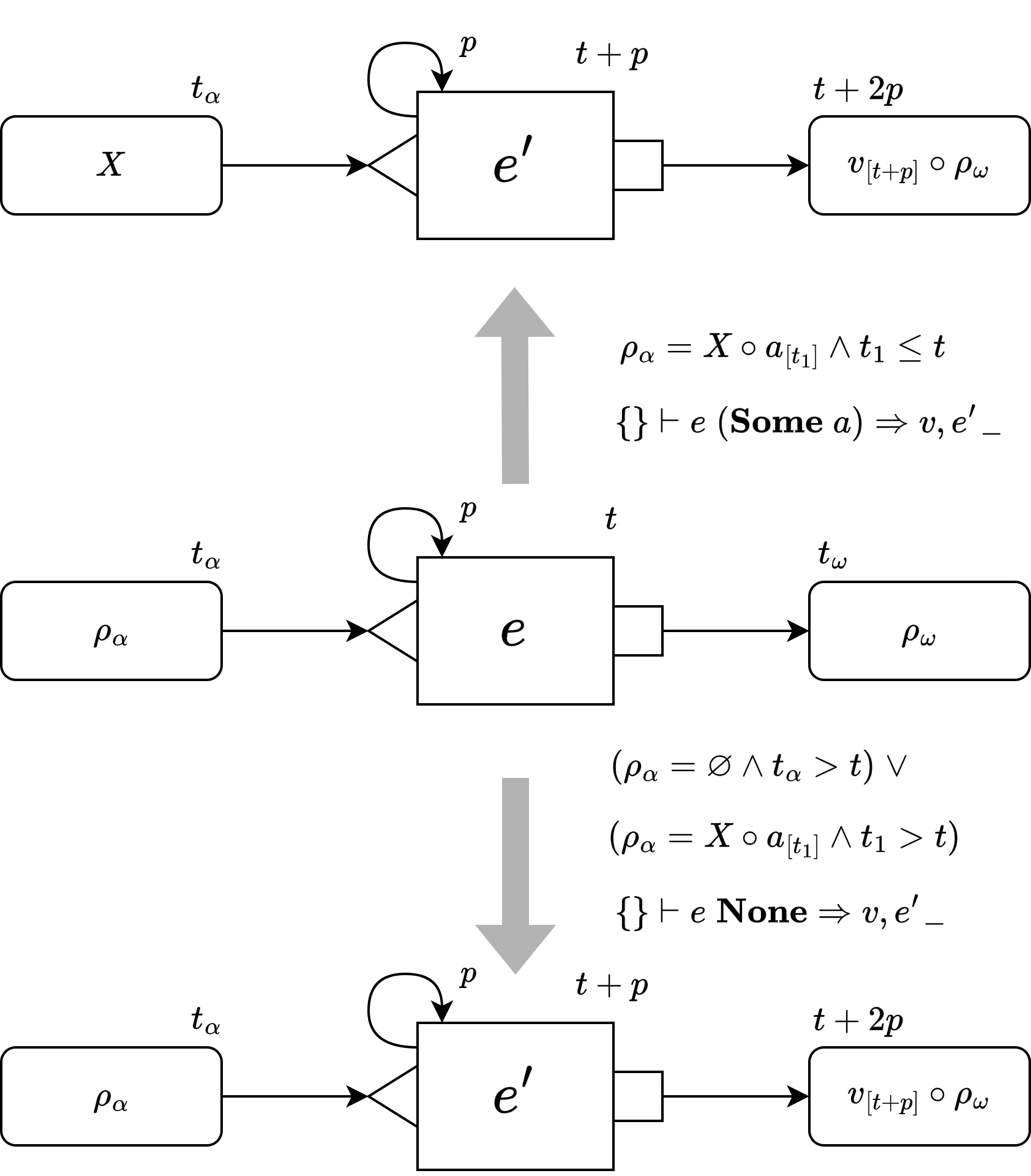}
    \caption{Rewriting rule for optional input.}
    \label{fig:rewrite_opt}
\end{figure}

Figure~\ref{fig:rewrite_opt} shows the rewriting rules for a node with optional input, which means its activation is never blocked. In the upper case, there is a valid input in the input channel, i.e., it has a time-tag smaller or equal to the current node activation time. Therefore, the expression $e\; (\textbf{Some}\, a)$ is evaluated. In the lower case, no valid input value is present, so either the input buffer is empty but the channel is valid until the current node activation time, or the time-tag of the oldest element is in the future. In this case, the expression $e\; \none$ is evaluated. The rest is analogous to the rules presented before.

To guarantee the invariant that the validity time is the earliest time point at which a value can be written to a channel (i.e., every $a_{[t']}$ added to a channel with validity time $t$ guarantees $t'\geq t$), we initialize each channel with the first possible writing time of its respective writing node. An example of a correct initialization of a network with initially empty buffers is shown in Figure~\ref{fig:init-config-correct}.

\begin{figure}[h]
    \centering
    \includegraphics[width=0.5\linewidth]{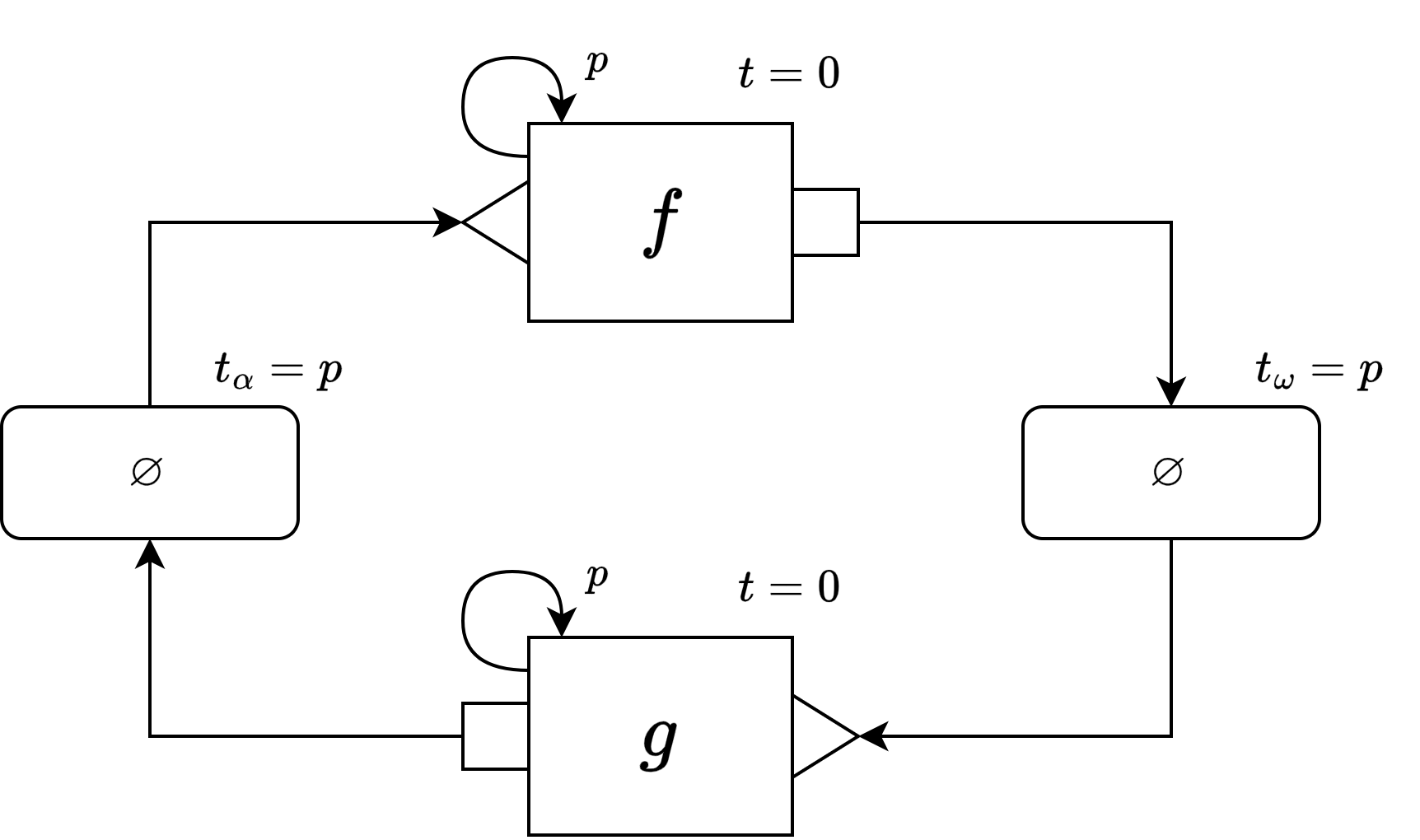}
    \caption{Example of a correct initial configuration}
    \label{fig:init-config-correct}
\end{figure}

\begin{theorem}
    The rewriting semantics described above is confluent, i.e., every possible sequence of rewritings leads to the same (timed) history of values appearing in each channel.
\end{theorem}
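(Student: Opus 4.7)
The plan is to reduce confluence to an induction on activation time. Each node $N$ has a fixed sequence of activation times $t_N^0 < t_N^1 < \cdots$, determined solely by its initial activation and its period $p_N$, and hence independent of rewriting order. What can \emph{a priori} depend on the rewriting order are (i) which rule fires at activation $k$ (read versus idle, or Some versus None), (ii) what value is consumed, and (iii) what value is produced. The goal is to prove that all three data are uniquely determined by the initial configuration; confluence then follows because the timed history of each channel is obtained by interleaving the writer's production sequence with the reader's consumption sequence, both of which are node-local and indexed by deterministic time-tags.

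The main step is strong induction on $T \in \mathbb{R}_{\geq 0}$ with the hypothesis: for every node $N$ and every index $k$ with $t_N^k < T$, both the rule applied at activation $k$ and the value it produced are uniquely determined. For the inductive step, fix an activation $t_N^k = T$. The applicable rewrite rule is determined solely by the state of $N$'s input channel at time $T$: whether its oldest element carries a time-tag $\leq T$, or, failing that, whether the validity time $t_\alpha$ exceeds $T$. Any element currently present in the channel with time-tag $\leq T$ was appended by the unique writer $W$ at some prior activation $t_W^j$ with $t_W^j + p_W \leq T$, and since $p_W > 0$ we have $t_W^j < T$; by the induction hypothesis such an element is deterministic. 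Likewise, the elements previously consumed (or skipped) by $N$ at activations before $T$ are deterministic by induction, so the portion of the channel relevant at time $T$ is fixed. Hence the rule fired at $t_N^k$ is uniquely determined, and by Theorem~\ref{thm:step-determinism} so is the value produced, closing the induction.

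The main obstacle is connecting this idealised per-node trace to arbitrary rewriting sequences. Two rewrites involving disjoint channels commute trivially; a writer and reader sharing a channel also commute, because reads are taken from the oldest end of the sequence while writes are appended at the newest end, so neither step disturbs what the other sees. The delicate case is the idle rule: a reader may fire idly at time $t$ only when $t_\alpha > t$, which together with the validity-time invariant (maintained by every writer rule, since a firing at $t$ bumps $t_\alpha$ to $t+2p$) certifies that no future write will produce an element with time-tag $\leq t$, ruling out any race in which a writer would retroactively add a missed element. Since every finite rewriting sequence realises only a prefix of the idealised trace, and every node is eventually unblocked once its writer has caught up, taking the limit over maximal rewriting sequences yields the same timed channel histories regardless of interleaving, which is precisely the stated confluence.
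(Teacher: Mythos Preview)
Your argument is correct, but it is organised quite differently from the paper's own proof. The paper gives a short, purely \emph{local} confluence argument: it observes that two enabled rewrites can only interact through a shared channel, and then checks case-by-case that a writer $A$ and a reader $B$ sharing a channel $\sigma$ commute---if $\sigma$ is nonempty, $A$'s write lands at the new end and cannot affect $B$'s read at the old end; if $\sigma$ is empty with $t_\sigma > t_B$, then $B$ idles whether or not $A$ fires first, because $A$'s output is time-stamped at or after $t_\sigma$; and if $t_\sigma \le t_B$, the reader rule simply does not apply and $A$ must go first. That is essentially your third paragraph, and it is the whole of the paper's proof.

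What you add on top is a global reconstruction: you fix a canonical per-node trace by strong induction on activation time, using Theorem~\ref{thm:step-determinism} explicitly to pin down produced values, and then argue that every maximal rewriting sequence realises this trace. This buys you a cleaner statement of \emph{why} confluence holds (the timed history is a function of the initial configuration, not merely invariant under reordering) and makes the role of the validity-time invariant more visible. The paper's diamond-style argument is terser and avoids the induction scaffolding, but leaves the passage from local to global confluence implicit. Both are adequate at the level of rigour the paper is working at; yours is the more thorough of the two. One small point worth tightening: ``strong induction on $T\in\mathbb{R}_{\ge 0}$'' should really be induction along the discrete, locally finite union of all nodes' activation times, which is what makes the well-foundedness go through.
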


\begin{proof}
The rewriting rules are local (two possible rewriting candidates can only overlap on a common channel). Assume that there are two different nodes $A$ and $B$, where $A$ outputs into a channel $\sigma$ which is the input to $B$, and that $t_B$ is the next activation time for $B$. Let $t_\sigma$ be the validity time of the common channel, which is always the earliest time point at which $A$ can write, an invariant maintained by the rules. If $\sigma = \varnothing$, but $t_\sigma > t_B$, then executing $B$ first causes an idle step. The same happens if node $A$ is executed first, as the output of $A$ can only appear at or after $t_\sigma$ (it cannot be earlier than the already established validity time of the channel). The case where $\sigma$ is not empty is trivial, as executing $A$ first only puts another element in the channel, which has no impact on the execution of $B$. In all other cases (e.g. $t_\sigma < t_B$) no rule can be applied to B, and A must execute first. \qed
\end{proof} 

The rules explained above generalize naturally to the case of multiple inputs and outputs, as well as to optional output ports (they only change the writing behaviour, never the activation conditions for a node). The coordination layer semantics makes no assumption about the order of rewriting, it therefore covers all possible scheduling policies of the network of nodes\footnote{note that, strictly speaking, we do not even need a notion of ``global time'', at least as long as there are no real-time dependent side effects. This fact may be exploited for functional verification.}.
\section{Simulation}
\label{sec:simulation}

As a first means to simulate programs written in Mimosa, we have implemented the language through a deep-embedding in the programming language OCaml. We have implemented an expression evaluation function which is a direct translation of the semantic evaluation rules given in Figure~\ref{fig:big-step}. This not only adds confidence in the correctness of the implementation, it also allowed us to find (often intricate) errors in the rules early on.

The steps introduced by step prototypes may cause side effects (such as the \texttt{print_int} in Listing~\ref{code:fib}). Therefore, care must be taken during simulation regarding their (temporal) ordering. Since the coordination semantics covers all possible scheduling orders, we have opted to implement it as a discrete-event simulator which constructs the global timeline of node executions assuming perfect timings (i.e., no jitter and no clock drift). In case, the execution of multiple nodes at the same time point leads to multiple observable side effects (such as printing), the order of these effects is undefined.

We provide a prototype compiler, which can translate a Mimosa program into the mentioned OCaml embedding. This always creates an OCaml module (similar to a package in other programming languages) with a specific signature:

\begin{minted}{ocaml}
    module type Simulation = sig
        type t
        val exec_ms : t -> unit
        val exec : t -> int -> unit
        val init : unit -> t
    end
\end{minted}

\vspace{0.5cm}

A particular simulation run is a value of type \mintinline{ocaml}{Simulation.t}, which is created by a call to \mintinline{ocaml}{Simulation.init ()}. After creation, the timeline of a simulation run can be advanced through \mintinline{ocaml}{Simulation.exec_ms} and \mintinline{ocaml}{Simulation.exec} by one or multiple milliseconds, respectively. 

If the Mimosa program defines prototypes (i.e., external steps), the output of the compiler is instead a functor, i.e., a function taking a module defining all the external steps through corresponding OCaml functions, and returning a module of the above given \mintinline{ocaml}{Simulation} type. This allows using the same code for both interactive simulation and automated tests by instantiating the functor with different argument modules. Examples are shown in the documentation of the compiler~\cite{huber-mimosa-artefact}.
\section{Related Work}
\label{sec:related-work}

Lingua Franca (LF)~\cite{lohstroh-lingua-franka} is a polyglot framework for the design of hard real-time systems. As it is based upon the Reactor model~\cite{lohstroh-reactors}, computation is triggered by the occurrence of events, which are time-stamped data items. As a polyglot language, LF is meant to augment other programming languages with a coordination layer based on discrete event semantics. This allows for easy integration of libraries and other software components already written in particular mainstream languages. However, it makes formal reasoning about a particular program behaviour difficult, as one would need formal semantics for each language used in the implementation.

Giotto~\cite{henzinger-giotto} is a programming language for real-time systems, where a program describes a set of periodic tasks that communicate through ports, which always keep the last value written to them. Giotto has a particular focus on execution modes, where the system behaviour is governed by some global mode. Giotto's ports offer a similar communication primitive as MIMOS registers.
In later work~\cite{kirsch-giotto-microkernel}, a microkernel has been introduced as a compilation target, which separates the execution of code interacting with the environment~\cite{henzinger-e-machine} and the scheduling of tasks in a system. A similar approach may be interesting for the implementation of Mimosa as well.

The synchronous programming paradigm introduces into software the same synchronous abstractions as those used for digital circuit design, where a global system tick abstracts over the timing characteristics of the underlying electronics. By shifting the timing domain from physical real-time to logical ticks, temporal reasoning about the behaviour of a program becomes easier, and correctness can be verified independently of the execution platform, as long as the execution platform can guarantee the synchronous hypothesis. This becomes difficult on heterogenous or distributed hardware, where jitter and clock-drift may happen. The PALS (Physically Asynchronous Logically Synchronous)~\cite{sha-pals} protocol has been proposed as an implementation strategy, where the programmer can use the synchronous abstraction during implementation, and a middleware runtime~\cite{al-nayeem-pals-middleware} takes care of the execution on distributed hardware. MIMOS, and by extension Mimosa, avoids these problems by shifting to an asynchronous model of computation in the first place.

The Timed-C compiler~\cite{natarajan-timed-c} is a source to source compiler, which extends the C programming language with a set of timing primitives. It defines tasks which communicate through channels, similar to the ones proposed for MIMOS, and compiles to plain C on top of a real-time operating system. As a general purpose language, C is very expressive, however, it is also known for its intricacies, which makes reasoning about the behaviour of C programs difficult. We advocate, that a language for embedded systems rarely needs this degree of expressiveness, and that restricting the language to a small, well-defined kernel simplifies the verification of systems programmed in it.

\section{Conclusion and future work}
\label{sec:conclusion}

In this work, we have introduced Mimosa, a new programming language for embedded systems software on top of the MIMOS computational model. The focus of this paper is on the definition of a formal semantics, and therefore, the presented language kernel has been kept minimal. To facilitate the implementation of real systems in Mimosa, the language is intended to be extended in multiple ways:

The current implementation of Mimosa is a prototype simulator which allows for defining test cases and experimenting with the language. We are working on a compiler that translates Mimosa expressions into equivalent C code and translates the coordination layer into a set of tasks for a real-time operating system. This will allow us to run Mimosa programs on real embedded hardware. For simulation purposes, we can assume the channels to have infinite capacity, for running Mimosa programs on real hardware it would be advantageous to know the bounds of the channels beforehand. Part of the MIMOS project~\cite{wang-mimos-tool} is therefore the development of algorithms for this kind of buffer-size estimation.

This paper presents a minimal language kernel. For a fully-fledged programming language, Mimosa needs to be extended with a module system, a standard library, and potentially a more expressive type system (including enumeration and record types). The concrete syntax of Mimosa currently only allows for function abstraction at the top-level, even if the abstract syntax already includes the potential for nested functions. Extending the syntax to include (anonymous) function definitions would allow us to treat functions as first-class values, which can even be transmitted through channels.

To keep the graphical rewriting rules simpler, the current coordination layer semantics does not include registers defined in the original MIMOS paper~\cite{wang-mimos}. Registers always keep only the latest value written to them, and are therefore particularly useful as a communication link between sensors and controllers, as the sensor can then run at a higher (or lower) frequency. Registers never block the activation of a node, the only addition to the rewriting rules concerns how values are read from them. Registers can be modelled analogously to channels, where reading does not remove a value from the channel, but only looks up the value item with the largest time-tag smaller or equal than the current activation time of the node.

The optional ports presented in this paper are actually a special case of the more general \emph{up-to} ports~\cite{wang-mimos-tool}, which allow reading or writing multiple data items from or to a channel. The current coordination layer semantics can be extended to cover these types of reading and writing strategies as well, which eases the communication between nodes with different periods. The expression language would then need to be extended with operators on lists of values (such as the typical \emph{map} and \emph{fold} operators known from functional programming languages). This is similar to the extension of Lustre with array iterators~\cite{morel-lustre-array-iterators}.

In Mimosa, certain expressions, such as conditionals, only selectively evaluate their sub-expressions, which in turn removes the referential transparency of variable names. We have chosen this design to provide a similar mechanism to Lustre's multi-clocks, without the added mental complexity. It also lays the foundations for facilitating expression evaluation with side effects. The only effect we currently model is memory, however, recent programming languages~\cite{madsen-flix-lang,brachthaeuser-effekt-lang,reinking-perceus} have shown the advantage of being able to express side effects of a program in terms of types. Being able to track the effects an expression may exhibit during evaluation offers new opportunities for optimization (e.g., application of functions to constants during compile time). It also allows for speculative execution, where a node without observable side effects can execute even before its release time in case it has all its required inputs. This can ultimately lead to better utilization of processors, while keeping the functional output of the system the same. We therefore intend to extend a future version of Mimosa with effect types.

\begin{credits}
  \subsubsection{\ackname} This work was partially funded by ERC through project CUSTOMER and by the Knut and Alice Wallenberg Foundation through project UPDATE.
\end{credits}

\bibliographystyle{splncs04}
\bibliography{references}

\newpage

\appendix
\section{Initialization analysis}
\label{app:init-analysis}

\emph{Initialization analysis} is part of the frontend of the Mimosa compiler~\cite{huber-mimosa-artefact}. It is responsible for proving that the undefined values at the start of sequences, which are introduced by \textbf{pre} operators, do not affect the output of a step.

The analysis performed by the Mimosa compiler is similar to the one in Lustre~\cite{colaco-lustre-init}, however, the selective evaluation of certain sub-expressions leads to different requirements for proper initialization.

For example, in Lustre the following holds:
\begin{align*}
    \ite{a}{(\arrow{0}{b})}{(\arrow{0}{c})} \equiv \arrow{0}{(\ite{a}{b}{c})}
\end{align*}

This does not hold in Mimosa, as the two branches are selectively evaluated depending on the value of the condition expression. In general, both branches of a conditional expression need to be properly initialized.

In Lustre, \textbf{pre} statements may also be nested to refer to values of a stream from multiple cycles before:
\begin{align*}
\arrow{0}{\arrow{0}{\pre{\pre{x}}}}
\end{align*}

This expression leads to an undefined value at the second cycle in Mimosa:
\begin{mathpar}
\inferrule*
{
    \inferrule*{ }{\bstep{\Gamma}{0}{0,\, \_}} \\
    \inferrule* { \inferrule*{ }{\bstep{\Gamma}{0}{0,\, \_}} \\ \inferrule* { 
        \inferrule*
        { \ldots }
        {\bstep{\Gamma}{\pre{x}}{\perp,\, \ldots}}
    }{\bstep{\Gamma}{(\pre{\pre{x}})}{\_,\, (\arrow{\perp}{\ldots})}}}
    {\bstep{\Gamma}{(\arrow{0}{\pre{\pre{x}}})}{\_,\, (\arrow{\perp}{\ldots})}}
}
{\bstep{\Gamma}{(\arrow{0}{\arrow{0}{\pre{\pre{x}}}})}{0,\, (\arrow{\perp}{\ldots}})}
\end{mathpar}

Nested \textbf{pre} statements are nevertheless allowed in Mimosa, $\bm{\rightarrow}$ and $\pre$ need to be used alternately, which leads to the expected behaviour:
\begin{mathpar}
\inferrule*
{
\inferrule*{ }{\bstep{\Gamma}{0}{0,\, \_}} \\
\inferrule*{ 
    \inferrule*{ 
        \inferrule*{ }{\bstep{\Gamma}{0}{0,\, \_}} \\
        \inferrule*{ \inferrule*{ }{\bstep{\Gamma}{x}{\Gamma \Downarrow_x,\, x}} }{ \bstep{\Gamma}{\pre{x}}{\perp,\, (\arrow{\Gamma \Downarrow_x}{\pre{x}})} }
    }{ \bstep{\Gamma}{(\arrow{0}{\pre{x}})}{0,\, (\arrow{\Gamma \Downarrow_x}{\pre{x}}}) }
}{ \bstep{\Gamma}{\pre{(\arrow{0}{\pre{x}})}}{\perp,\, (\arrow{0}{\arrow{\Gamma \Downarrow_x}{\pre{x}}})} }
}
{
\bstep{\Gamma}{(\arrow{0}{\pre{(\arrow{0}{\pre{x}}})})}{0,\, (\arrow{0}{\arrow{\Gamma \Downarrow_x}{\pre{x}}})}
}
\end{mathpar}

Given this requirement, each sequence in Mimosa can only have one of two possible initialization types: initialized or uninitialized. In order to further simplify the reasoning about programs, we also require that step inputs and outputs are always initialized.
\newpage
\section{Node-level rewriting rules}
\label{app:node-rewriting}

\begin{figure}[h]
    \centering
    \includegraphics[width=0.9\linewidth]{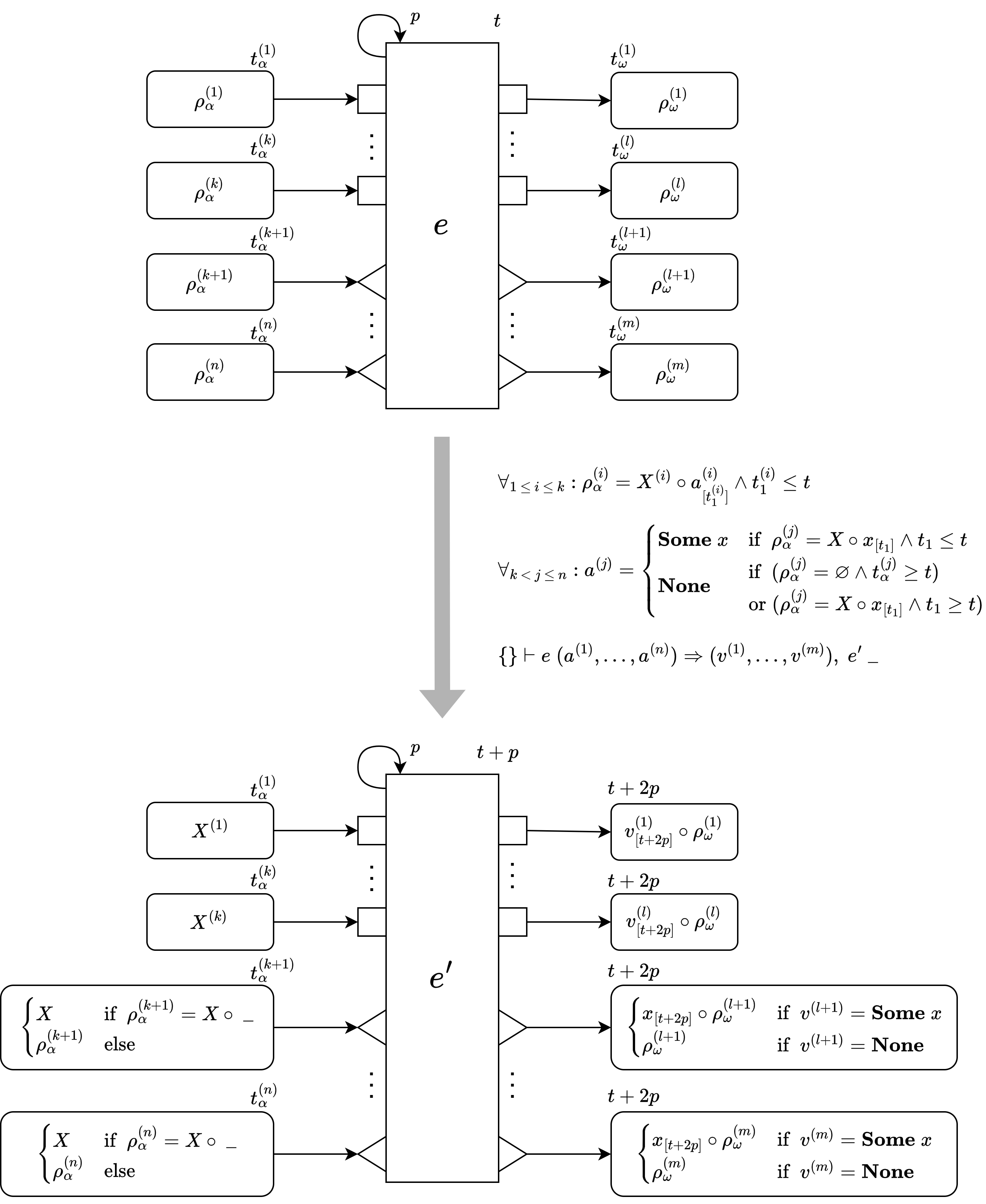}
    \caption{General rewriting rule for node execution.}
    \label{fig:rewriting-exec}
\end{figure}

\begin{figure}[h]
    \centering
    \includegraphics[width=0.9\linewidth]{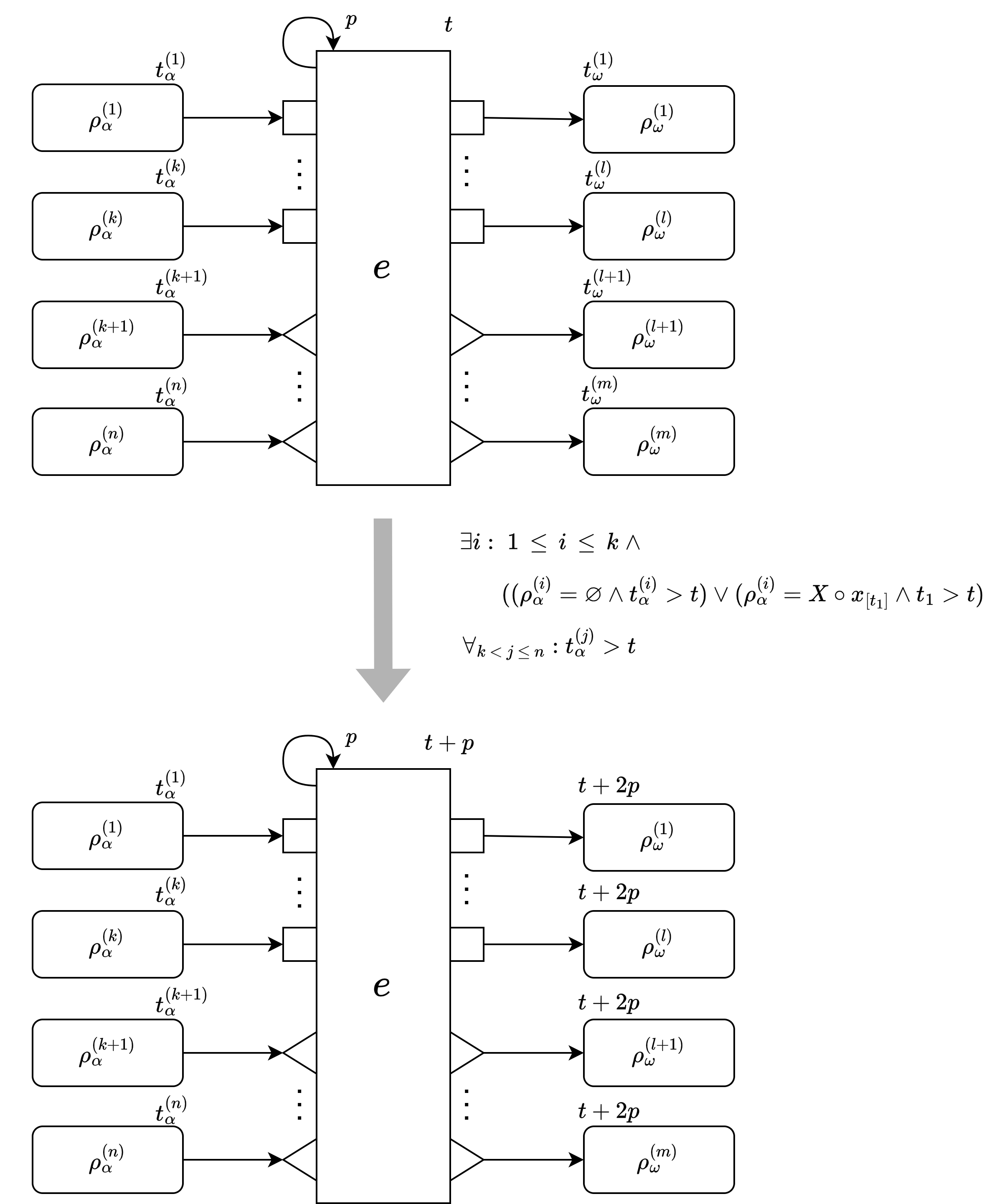}
    \caption{General rewriting rule for idle step.}
    \label{fig:rewriting-idle}
\end{figure}
\end{document}